\newtheorem{theorem}{Theorem}
\newtheorem{proposition}{\bf Proposition}
\newtheorem{definition}{\bf Definition}
\newtheorem{remark}{Remark}
\newcommand{\old}[1]{{\color{red}{}}}
\begin{document}
\begin{frontmatter}

\title{Locating modifications in signed data for partial data integrity}

\author[ufsc]{Tha\'{i}s Bardini Idalino}
\address[ufsc]{Federal University of Santa Catarina, Florian\'{o}polis, Brazil}

\author[ottawa]{Lucia Moura}
\address[ottawa]{University of Ottawa, Ottawa, Canada}

\author[ufsc]{Ricardo Felipe Cust\'{o}dio}

\author[carleton]{Daniel Panario}
\address[carleton]{Carleton University, Ottawa, Canada}

\begin{abstract}
We consider the problem of detecting and locating modifications in signed data to
ensure partial data integrity. We assume that the data is  divided
into $n$ blocks (not necessarily of the same size) and that a threshold $d$ is given for the maximum
amount of modified blocks that the scheme can support. We propose efficient algorithms
for signature and verification steps which provide a reasonably compact signature size,
for controlled sizes of $d$ with respect to $n$. For instance, for fixed $d$ the standard signature size 
gets multiplied by a factor of $O(\log n)$, while allowing the identification of up to $d$ modified blocks.
Our scheme is based on nonadaptive combinatorial group testing and cover-free families.
\end{abstract}

\begin{keyword}
digital signatures, partial data integrity, modification localization, combinatorial group testing, cover-free families.
\end{keyword}

\date{March 2015}
\journal{journal}

\end{frontmatter}

\section{Introduction}

Digital signature schemes can detect if modifications were done in a signed document, but do not offer information on where exactly those modifications occurred. 
In this context, even a single bit change would invalidate the whole document. In the present paper, we provide 
 a general Modification Location Signature Scheme, which determines which parts of the document were modified, thus ensuring
partial data integrity.
 
Partial data integrity is useful in several scenarios. First, we may need to ensure the integrity of specific parts of a document. For example, in fillable forms the owner may need to assure that the document is official, while some parts are expected to be modified.
Second,
in a data forensics investigation of a crime,
the investigator could have more clues on who 
is the attacker by knowing what exactly was modified  \cite{Goodrich}. 
Third, assuring that part of the data is intact can improve the efficiency of a computer system. For example, in a large database, the modification of some of its records would not invalidate
the whole database, avoiding total disruption of service. 

One can also see partial data integrity as a solution for guaranteeing privacy protection, where 
the extraction of selected portions of a signed document is to be shared with another party (content extraction signature \cite{Steinfeld}, redactable signature \cite{Johnson}).
 Our signature scheme capable of locating modifications can be used in this application by substituting the removed parts by ``blank" symbols. The original signature can be used to guarantee the integrity of the non-removed parts.

The Modification Location Signature Scheme (MLSS) proposed in this paper employs combinatorial
group testing to determine which blocks of a document contain modifications and which ones are intact.
This work is closely related to the work of Zaverucha and Stinson~\cite{zaverucha} who propose the use of group testing to identify modified documents in batch. However, while in~\cite{zaverucha}  group testing is used on the verifier's end to speed up the batch verification algorithm, 
in our approach it is used both at the signer's and verifier's end, which greatly improves the signature size over 
the trivial idea of treating each block of a document as an independent signed document. This trivial idea would require $n$ signatures for a document
divided into $n$ blocks, which would not be efficient, while for the cases of interest here we would have the size of a signature multiplied by a factor of $O(\log n)$ instead (see Theorem~\ref{analysis} and the discussion that follows it). While MLSS is applicable to any type of document (text, pictures, videos or a mix), the type of document may influence the way one divides it (see Section 3.4).

Definition of the problem and related work in digital signatures and group testing are presented in Section~\ref{sec:related}; the algorithms for the proposed Modification Location Signature Scheme and analysis are provided in Section \ref{sec:problem}; conclusions are given in Section \ref{sec:conclusion}.

\section{Definition of the problem and related work} \label{sec:related}
\subsection{Digital signatures}

Following a general definition~\cite{zaverucha}, a signature scheme is specified by algorithms ({\sc Gen, Sign, Verify}). {\sc Gen}$(k)$ receives a security parameter $k$ and outputs a pair of keys $(s_{k},p_{k})$, a secret key
used for signing and a public key used for verification, respectively.
{\sc Sign}$(s_{k},m)$ outputs a signature $\sigma$ on the message $m$ using the secret key $s_{k}$.  
{\sc Verify}$(p_{k},\sigma,m)$ outputs 1, using the public key $p_{k}$, if $\sigma$ is a valid signature of $m$, and 0 otherwise.

We propose a general digital signature scheme for signing a document divided into blocks providing, in the case
of modifications on the document after signing, the extra capability of locating which blocks have been modified.

\begin{definition}\label{def:MLSS} Modification Location Signature Scheme (MLSS):
Let $B=(B_1, \ldots,B_n)$ be a document divided into  $n$ blocks.
{\sc MLSS-Gen}($k$) receives a security parameter $k$ and outputs a pair of keys $(s_{k},p_{k})$.
{\sc MLSS-Sign}$(s_{k},B)$ outputs a signature $\sigma$ on  $B$ using the secret key $s_{k}$.
{\sc MLSS-Verify}$(p_{k},\sigma,B)$ outputs 1 if, using the public key $p_{k}$, $\sigma$ is a valid signature of $B$,
it outputs 0 if  $\sigma$  has been modified or is not authentic, 
and otherwise ($B$
has been modified) outputs extra information on the location of the modifications in $B$.
\end{definition}

In this paper, we present an approach, which is based on combinatorial group testing, to solve the challenge stated in Definition~\ref{def:MLSS}. 
Zaverucha and Stinson~\cite{zaverucha} observe that finding invalid signatures in a batch is a group testing problem, and propose the use of group testing methods to
improve the efficiency of the batch verification algorithm. In~\cite{zaverucha}, by exploring the best known group testing algorithms, they show how to
run $t$ signature verifications  to verify  a batch of $n$ signed documents, where $t$ is substantially smaller than $n$; in their case, adaptive and nonadaptive 
group testing can be used. We use a similar idea to improve the verification algorithm, but we also suggest applying group testing at the signer's
end in order to minimize the signature size.
We produce $t$ digests, each one involving a subset of the $n$ blocks
of the document ($t$ much smaller than $n$). This tuple of digests is signed and sent with the document, allowing the verifier to determine the blocks that were modified. 
This approach requires nonadaptive group testing, since the digests must be prepared at the signer's end independently of where modifications may occur. In this, case we need an upper bound $d$ on the number of modified blocks; this threshold value $d$
needs to be chosen carefully to keep control on the size $t$.

The presented method is not specific for asymmetric key encryption since one can choose any digital signature algorithm as {\sc Sign} and {\sc Verify}. In this paper, our presentation is based on public key digital signatures.

\subsection{Group testing}\label{sec:gt}
The purpose of group testing is to identify $d$ defective elements from a set of $n$ elements pooled into $t$ groups where $t<n$. The groups are tested, instead of all elements individually.
For a subset of elements (pool), if at least one of the elements is defective, we return the result for the test as a ``fail''; if no element is defective we return the result of the test as a ``pass''. In {\em adaptive} group testing, the results of the previous tests are used to
determine subsequent tests; in {\em nonadaptive} group testing, all the tests are specified ahead of time, which allows them to be run in parallel (see the book by Du and Hwang~\cite{gtLivro}). In our method, we need to use nonadaptive group testing, since the organization of blocks into
groups must be done at the signer’s end, and thus before the modifications (“defects’’) are introduced.
In this case, the most effective way to detect up to $d$ defectives is to use a cover-free family (CFF). 

\noindent
\begin{definition}\label{def:CFF}
A  \emph{$d$-cover-free family}, denoted $d$-CFF$(t,n)$ is a $t\times n$ binary matrix $M$ with $n$ $\geq$ $d+1$,
such that for any set of column indexes $C$ with $|C|= d$ and column $c\not \in C$, the following property holds:
there exists a row $i$ satisfying $M_{i,c}=1$ and $M_{i,j}=0$ for all $j \in C$.
\end{definition}

We form the tests according to the rows of matrix $M$, i.e. for each $1\leq i \leq t$, test $i$ consists of 
exactly the items $j$ for which $M_{i,j}=1$. \old{As stated in the next proposition,}The properties of CFFs assure that if the number of defectives is at most $d$ then  it is enough to determine the non-defective items from the passing tests. Then, we can conclude that all other items are defective.

Given $d$ and $n$, we wish to find a $d$-CFF$(t,n)$ for the smallest possible $t$, 
which we call $t(d,n)$.
We mention a few useful explicit constructions found in the literature.
When $d=1$,  we can use Sperner theorem \cite{sperner}
to show that  the smallest number of tests possible is $t(1,n)=\min\{t : {t \choose \lfloor t/2 \rfloor}\geq n\}$. We observe that as $n \rightarrow \infty$, $t(1, n) \sim \log_2 n$. The top-left of Fig.~\ref{fig:ourcheme} gives an example with $n=6$ and $t=4$. For arbitrary $d$ and $n$, we consider the constructions of Porat and Rothschild~\cite{PR} (\old{PR, }with $t \leq (d+1)^2 \ln n$) and  Pastuszak et al. \cite{PPS} (with $t\leq (d+1) \sqrt{n}$). Some of these constructions are surveyed in~\cite{zaverucha}. We note that for specific small $d$ one can find more efficient  constructions than the general ones
listed here; for example, for $d=2$ a smaller $t$ can be achieved (see~\cite{gtLivro}, Section 7.5). This more specific analysis is out of the scope of this paper.

\section{Location of modifications in signed documents}\label{sec:problem}
\subsection{Signature generation and verification algorithms} 
In our scheme, both signer and verifier use the same $t\times n$ matrix for a $d$-cover-free family via a call to a deterministic function {\sc MLSS-CFF}$(d, n)$, which can use constructions presented in Proposition~\ref{tab:values}, given later in this section.
We add a parameter $d$ in our algorithm, built on a given (traditional) signature scheme {\sc (Gen, Sign, Verify)}.
Algorithm {\sc MLSS-Gen}($k$) consists of a simple call to {\sc Gen}($k$).

Considering a document divided into blocks $B=(B_1,B_2,\ldots, B_n)$, algorithm {\sc MLSS-Sign} works as follows. Let $h_1,h_2,\ldots, h_n$ be the result of a public hashing algorithm $h(\cdot)$ applied on blocks $B_1,B_2,\ldots, B_n$, respectively. The tests, given by each row $i$ of the matrix, indicate which 
block hashes $h_j$ of the document are to be concatenated to form a test $T_i$ which is a digest of these concatenated hashes. Another digest $h^* = h(B)$ is calculated from $B$ yielding $T = (T_1,T_2,\ldots, T_t, h^{*})$. The signature of $B$ is given by $\sigma=(T, \sigma')$, where $\sigma'=\ ${\sc Sign}$(s_{k},T)$. We note that $T$ needs to be part of the signature since otherwise we would not be able to identify the modified blocks, as we present next.

At the other end, the {\sc MLSS-Verify} algorithm verifies if $\sigma$ is a valid signature by verifying $T$ with $\sigma'$. If so, then it compares $h^{*}$ with the hash of the received document $B'$. If they match, the document was not modified; otherwise, the algorithm locates the modified blocks as follows. Using the same method as the sender, 
it  computes $(T'_1,T'_2,\ldots,T'_t)$ from $B'$.
The set of indexes $i$ where $T'_i\not=T_i$
indicates which tests have failed and using group testing,
it  deduces exactly which blocks $B_j$ have been modified. This process is depicted in Fig.~\ref{fig:ourcheme}. Algorithm {\sc MLSS-Verify} also allows a faster verification that does not locate the modified blocks, much as a standard verification algorithm; this option is applied by setting the boolean location parameter $lc$ to false. The signature and verification algorithms are given next.

\begin{figure} [h]
\begin{center}
   \includegraphics[width=0.9\textwidth]{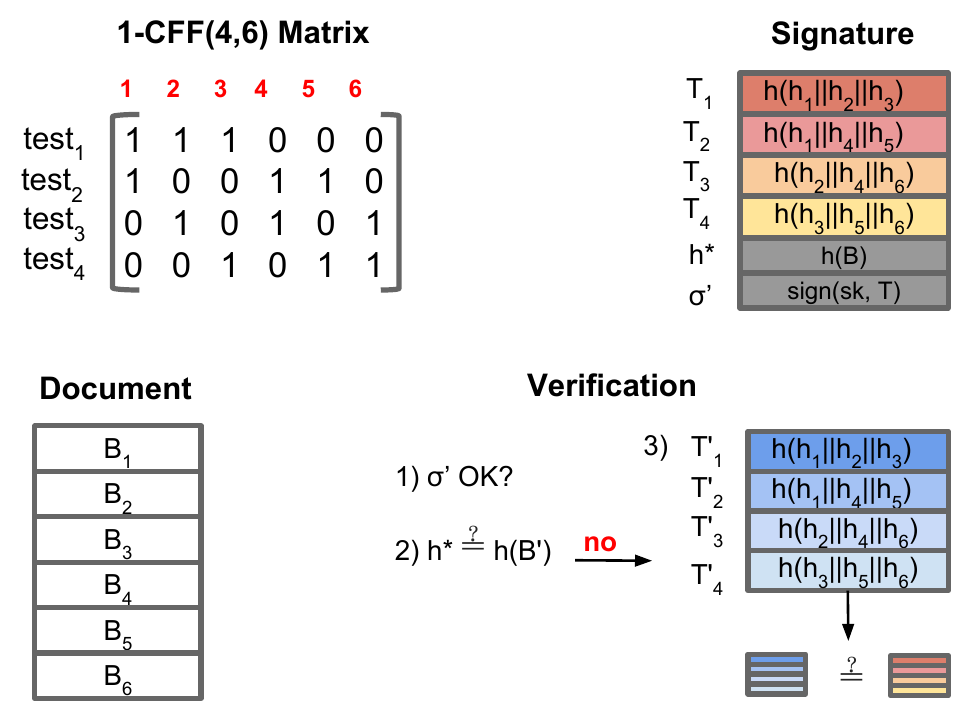}
   \caption{{\sc MLSS} signature and verification scheme.}
    \label{fig:ourcheme}
\end{center}
\end{figure}

\vspace{2mm}
\noindent
{\sc MLSS-Sign}$(s_{k},B,d)$
\\\textbf{Input:}  secret key $s_{k}$, document $B=(B_1,B_2,\ldots,B_n)$,  modification  threshold $d$.
\\\textbf{Output:} a signature $\sigma$.
\begin{enumerate}
\item Use $n$ and $d$ to determine $t$ and the $d$-CFF$(t,n)$ matrix $M$ to be used: \\$M =$ {\sc MLSS-CFF}$(d, n)$.
\item Let $h_j \leftarrow h(B_j)$, $1 \leq j \leq n$. Use $(h_1,h_2,\ldots, h_n)$ and $M$ to compute $T_1, \ldots , T_t$, as follows: for each row $1\leq i\leq t$ compute $c_i$, which is  the concatenation of the hashes $h_j$ for $j$ such that $M_{i,j}=1$, and let $T_i=h(c_i)$. Calculate $h^* = h(B)$ and set $T=(T_1,T_2,\ldots, T_t, h^*)$.
\item Compute $\sigma'=\ ${\sc Sign}$(s_{k},T)$ and output  $\sigma=(T, \sigma')$.
\end{enumerate}

The verification algorithm has three possible outcomes: signature has been modified (output 0); signature and document were not modified (output 1); signature was not modified and document has been modified (output I as the set of modified block indices, if $lc$ = true; output 2, otherwise).

\vspace{2mm}
\noindent
{\sc MLSS-Verify}$(p_{k},\sigma,B',d, lc)$ 
\\\textbf{Input:} public key $p_{k}$, signature $\sigma$, document $B'=(B'_1,B'_2,\ldots,B'_n)$, modification  threshold $d$, boolean location parameter $lc$.
\\\textbf{Output:}  0, 1, 2 or I, corresponding to the outcomes above explained.
\begin{enumerate}
\item Verify $\sigma$: Let $\sigma=(T,\sigma')$ and $T=(T_1,T_2,\ldots, T_t, h^*)$. If {\sc Verify}$(p_{k},T,\sigma')=0$ then output 0 and exit.
\item Compute $h^{**} \leftarrow h(B')$. If $h^* = h^{**}$ then output 1 and exit.
\item If $lc$ = false, then output 2 and exit. 
\item Use $n$ and $d$ to determine the $d$-CFF$(t,n)$ matrix $M$ to be used: \newline $M =$ {\sc MLSS-CFF}$(d, n)$.
\item Use the same process as Step 2 of {\sc MLSS-Sign} to compute $T'=(T'_1,\ldots, T'_t)$ from $B'$ using $M$ and $h(\cdot)$.
\item Compare $T$ and $T'$ and record the discrepancies (failing tests): \newline $F=\{i \in \{1, \ldots, t\}: T_i \not = T'_i\}$.
\old{ \item If $F=\emptyset$ then output 1 and exit.} 
\item Use group testing to determine the modified blocks: \label{thisstep}\\
	\hspace{4mm} Initialize $I=\{1,\ldots,n\}$;\\
	\hspace{4mm} for each $i\not\in  F, 1 \leq i \leq t,$ do\\
	\hspace*{8mm} for each $j \in I$ such that $M_{i,j}=1$ do $I \leftarrow I\setminus\{j\}$; \\
	\hspace{4mm} output $I$.
\end{enumerate}

\begin{remark}
If the number of modified blocks is larger than $d$, then the algorithm outputs a set $I$, with $|I| > d$, that contains all the modified blocks and possibly some more. However, any block that is not in $I$ was not modified.
\end{remark}


\subsection{Correctness and complexity of the algorithms}

\begin{proposition}
Consider a document and its signature generated by {\sc MLSS-Sign}. Then, algorithm {\sc MLSS-Verify} correctly verifies the signature according to the three possible outcomes. In particular, if the signature is valid and there are up to $d$ modifications, then $I$ is precisely the set of indices of these modified blocks.
\end{proposition}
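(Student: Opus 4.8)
The plan is to verify the three output cases of \textsc{MLSS-Verify} in turn, using the unforgeability of the underlying signature scheme and the defining property of a $d$-CFF.

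\textbf{Outputs $0$ and $1$.} First I would observe that Step~1 outputs $0$ precisely when \textsc{Verify}$(p_k, T, \sigma') = 0$, i.e. when $T$ (hence $\sigma$) has been tampered with or is not authentic; correctness here is inherited directly from the correctness/unforgeability of \textsc{(Gen, Sign, Verify)}, so nothing new is needed. Assuming we pass Step~1, the value $T = (T_1,\ldots,T_t,h^*)$ recovered by the verifier equals the one produced by \textsc{MLSS-Sign}. Step~2 compares $h^* = h(B)$ with $h^{**} = h(B')$; if $B' = B$ these agree and output $1$ is correct, while if $B' \neq B$ the hash $h(\cdot)$ distinguishes them (under the standard collision-resistance assumption on $h$), so we proceed to the localization steps. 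This also disposes of the ``faster verification'' branch: when $lc$ is false, Step~3 outputs $2$, which by the above means exactly that $\sigma$ is authentic but $B$ was modified.

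\textbf{Output $I$ (localization).} Now suppose $\sigma$ is valid, $B' \neq B$, and $lc$ is true, and let $D = \{ j : B'_j \neq B_j \}$ be the true set of modified blocks, with $|D| \le d$. The key combinatorial step is to relate the set $F$ of failing tests computed in Step~6 to $D$. For a row $i$, recall $T_i = h(c_i)$ where $c_i$ is the concatenation of the $h_j$ with $M_{i,j}=1$, and $T'_i$ is defined analogously from $B'$. I would argue: (i) if $M_{i,j}=0$ for every $j \in D$, then $c_i = c'_i$ (the concatenation only involves unchanged blocks), so $T_i = T'_i$ and $i \notin F$ — that is, every ``clean'' row passes; and (ii) conversely, the $d$-CFF property guarantees that $F$ is large enough to pin down $D$ exactly. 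Concretely, Step~7 starts with $I = \{1,\ldots,n\}$ and removes from $I$ every index $j$ that appears ($M_{i,j}=1$) in some passing row $i \notin F$. By (i), no $j \in D$ is ever removed, so $D \subseteq I$ at the end. For the reverse inclusion $I \subseteq D$: take any $c \notin D$; since $|D| \le d$ we may (padding $D$ to size $d$ if necessary, which only adds more constraints) apply Definition~\ref{def:CFF} to $D$ and $c$ to get a row $i$ with $M_{i,c}=1$ and $M_{i,j}=0$ for all $j \in D$; by (i) that row passes, so $c$ is removed from $I$ in Step~7. Hence $I = D$.

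\textbf{Main obstacle.} The routine parts are the bookkeeping in the loop of Step~7 and the appeal to collision resistance of $h$; the one place needing care is implication (ii), specifically handling $|D| < d$ (the CFF property is stated for $|C| = d$ exactly, so one must observe that enlarging $D$ to a superset of size $d$ disjoint from $c$ is always possible because $n \ge d+1$, and that this enlargement can only remove \emph{more} elements from $I$, not fewer) and, symmetrically, handling the case $|D| > d$ to justify the Remark — there the CFF argument breaks, but implication (i) still holds, giving $D \subseteq I$ with possibly $I \supsetneq D$. I would also note explicitly that when $F = \emptyset$ the loop removes every index appearing in any row; since in a $d$-CFF every column is nonzero, this yields $I = \emptyset$, consistent with the fact that $B' \neq B$ forces $h^* \neq h^{**}$ to have been caught already — so the surviving-set interpretation remains coherent. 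Complexity is then immediate: \textsc{MLSS-Sign} performs $n$ block-hashes plus $t$ test-hashes plus one call to \textsc{Sign}, and \textsc{MLSS-Verify} one call to \textsc{Verify} plus, in the localization branch, the recomputation of $T'$ and an $O(\sum_{i,j} M_{i,j})$ sweep in Step~7.
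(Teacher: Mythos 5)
Your overall route is the same as the paper's: passing tests certify blocks, and the $d$-CFF property guarantees that every unmodified block lies in some passing test, so the survivors of Step~7 are exactly the modified blocks. In fact you are more explicit than the paper on two side issues it glosses over: padding $D$ to size exactly $d$ when $|D|<d$ (possible since $n\geq d+1$), and the behaviour when $|D|>d$ (the Remark). Your argument for the inclusion $I\subseteq D$ (take $c\notin D$, apply the CFF definition to get a row covering $c$ and avoiding $D$, note by (i) that this row passes, so $c$ is removed) is correct and is the same argument the paper makes in its last sentence.

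There is, however, one genuine flaw: for the inclusion $D\subseteq I$ you write ``By (i), no $j\in D$ is ever removed.'' Implication (i) says only that rows disjoint from $D$ pass; it says nothing about which rows pass, so it cannot prevent a modified index from being removed. What you need is the \emph{converse}: every passing row is disjoint from $D$, equivalently, any row $i$ with $M_{i,j}=1$ for some $j\in D$ fails. That direction is where the hash assumption does its work: $B_j\neq B'_j$ gives $h_j\neq h'_j$ (no collision at block level), hence $c_i\neq c'_i$, hence $T_i=h(c_i)\neq h(c'_i)=T'_i$ (no collision at test level). This is exactly the sentence in the paper's proof ``a matching test $T_i=T'_i$ guarantees that all the blocks that are concatenated to produce $T_i$ have not been modified,'' justified there by the no-collision assumption on $h$. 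You do list collision resistance among the routine ingredients, but at the decisive step you cite (i) instead of this converse; as written, a collision could cause a modified block to be deleted from $I$, and the conclusion $I=D$ would fail. The repair is one line (state and use the converse of (i), appealing to collision-freeness at both hash levels), but it is a needed line, since it is the only place in the whole localization argument where the security of $h$ enters.
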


\begin{proof}
Steps 1 and 2 of {\sc MLSS-Verify} identify the case of invalid signature and the case of valid signature with unmodified document, respectively. The next steps deal with the case of valid signature and modified document. Depending on the boolean parameter $lc$, we exit at Step 3 or move on to locate the modified blocks.
Steps 4 to 6 perform the hash concatenations dictated by matrix $M$ to reproduce the creation of tests $(T'_1,\ldots, T'_t)$ from $B'$.
The correctness of Step~\ref{thisstep} of {\sc MLSS-Verify} follows directly from the properties of a cover-free family,
and the assumption that the hash function used has the desired property (no collisions).
A matching test $T_i =T'_i$ guarantees that all the blocks that are concatenated to produce $T_i$ have not been modified. If the total number of modified blocks is at most $d$, then every unmodified block $B_j$ is part of some matching test $T_i$, and the remaining blocks are precisely the modified blocks. 
\end{proof}

We now analyze the algorithms and compare them with traditional signature schemes. Denote by $comp(x)$ the cost of comparing $x$ bits. Let $b$ be the size of $B$ in bits, $w$ be the number of 1's on the CFF matrix $M$, and $t$ be the number
of rows in $M$. Denote by $cost_{CFF}(d,n)$ the cost of computing function {\sc MLSS-CFF$(d,n)$}.

\begin{theorem}\label{analysis}
Consider {\sc MLSS-Sign} and {\sc MLSS-Verify} algorithms. Assume that the cost (running time) of computing the hash function $h$, denoted by $cost_{h}$, is a linear function on the input size,  
and let $h_{out}$ denote the number of bits of the output of $h$.  Assume algorithm {\sc Sign} {\sc (Verify)} first applies function $h$ on its input message and then applies a signature method (verification method) with cost denoted by $cost_{\mbox{sign}}(h_{out})$ $(cost_{\mbox{verify}}(h_{out}))$.
Then,
\begin{enumerate}
\item The size of the signature $\sigma$ produced by {\sc MLSS-Sign} is $(t+1) h_{out}+|\sigma'|$ while the size produced by a traditional
 signature method is $|\sigma'|$.
\item The running time of {\sc MLSS-Sign} is $cost_{h}(2b+(w+t+1) h_{out}) + cost_{\mbox{sign}}(h_{out})+cost_{CFF}(d,n)$, while the running time of {\sc Sign} is $cost_{h}(b) + cost_{\mbox{sign}}(h_{out})$. 
\item The running time of {\sc MLSS-Verify} when the signature is invalid (output 0), or the document has not been modified (output 1) is
$cost_h(b+(t+1) h_{out})+cost_{\mbox{verify}}(h_{out})+comp(h_{out})$ (this is also the cost when $lc$ = false); 
when the document has been modified but the signature has not and $lc$ = true, the running time of {\sc MLSS-Verify} is
$cost_{h}(2b+(w+t+1) h_{out}) + cost_{\mbox{verify}}(h_{out})+cost_{CFF}(d,n) + $ \\$comp((t+1)h_{out}) + c.w$, where $c$ is a constant. 
The running time of {\sc Verify} is $cost_{h}(b) + cost_{\mbox{verify}}(h_{out})$.

\end{enumerate}
\end{theorem}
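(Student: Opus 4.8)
The plan is a line-by-line accounting of {\sc MLSS-Sign} and {\sc MLSS-Verify}, using the two structural assumptions in the statement: $cost_{h}$ is linear in the bit-length of its input (so hashing several strings costs $cost_{h}$ of their total length, and every hash output has exactly $h_{out}$ bits), and {\sc Sign} ({\sc Verify}) first applies $h$ to its input and then runs a method of cost $cost_{\mbox{sign}}(h_{out})$ ($cost_{\mbox{verify}}(h_{out})$). The only combinatorial fact I need is that $\sum_{i=1}^{t}|\{j : M_{i,j}=1\}| = w$, the total number of $1$'s in $M$; this controls both the total length of the concatenated strings $c_i$ that get hashed (Step~2 of {\sc MLSS-Sign}, Step~5 of {\sc MLSS-Verify}) and the number of set-removal operations in the group-testing loop of Step~7.

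For part~(1), I would read the size directly off $\sigma=(T,\sigma')$ with $T=(T_1,\dots,T_t,h^*)$: each of $T_1,\dots,T_t,h^*$ is one hash output, so $|T|=(t+1)h_{out}$ and $|\sigma|=(t+1)h_{out}+|\sigma'|$, against $|\sigma'|$ for a traditional scheme that merely signs $h(B)$. For part~(2), I would add the costs of the steps of {\sc MLSS-Sign}: Step~1 costs $cost_{CFF}(d,n)$; in Step~2, hashing the $n$ blocks costs $cost_{h}(b)$ since their lengths sum to $b$, forming and hashing the $c_i$'s costs $cost_{h}(w\,h_{out})$ by the observation above, and $h^*=h(B)$ costs $cost_{h}(b)$; Step~3 invokes {\sc Sign} on $T$, hashing $(t+1)h_{out}$ bits and then paying $cost_{\mbox{sign}}(h_{out})$. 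By linearity, $cost_{h}(b)+cost_{h}(w\,h_{out})+cost_{h}(b)+cost_{h}((t+1)h_{out})=cost_{h}(2b+(w+t+1)h_{out})$, giving the claimed total; the traditional {\sc Sign} pays only $cost_{h}(b)+cost_{\mbox{sign}}(h_{out})$.

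For part~(3), I would split on the exit point of {\sc MLSS-Verify}. An invalid signature stops at Step~1 after $cost_{h}((t+1)h_{out})+cost_{\mbox{verify}}(h_{out})$. An unmodified document (or $lc$ false) additionally runs Step~2, adding $cost_{h}(b)$ for $h(B')$ and $comp(h_{out})$ for the comparison with $h^*$, so the worst case among these outcomes is $cost_{h}(b+(t+1)h_{out})+cost_{\mbox{verify}}(h_{out})+comp(h_{out})$; I would remark that output~$0$ is strictly cheaper, so the theorem simply quotes a uniform upper bound. In the last case (valid signature, modified document, $lc$ true), Steps~1--2 cost as above, Step~4 costs $cost_{CFF}(d,n)$, Step~5 rebuilds $(T'_1,\dots,T'_t)$ at cost $cost_{h}(b)+cost_{h}(w\,h_{out})$, Step~6 compares $t$ hash values at cost $comp(t\,h_{out})$, and Step~7 performs at most $w$ basic operations at cost $c\cdot w$ for a constant $c$ (each removal $O(1)$ with a boolean membership array and $M$ stored as adjacency lists of its $1$-entries, the loop overhead folded into $c$). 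Collecting the hash terms again yields $cost_{h}(2b+(w+t+1)h_{out})$, and the two comparison terms combine to $comp(h_{out})+comp(t\,h_{out})=comp((t+1)h_{out})$, producing the stated expression; the traditional {\sc Verify} pays only $cost_{h}(b)+cost_{\mbox{verify}}(h_{out})$.

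Since everything reduces to bookkeeping, the only real care needed is presentational: being consistent that an internal $h$-call inside {\sc Sign}/{\sc Verify} is charged to that routine (which is exactly why the $(t+1)h_{out}$ term appears), correctly tracking which steps each exit path of {\sc MLSS-Verify} skips, and justifying that the group-testing loop together with all loop and control overhead is genuinely $O(w)$ and $O(1)$ under a suitable sparse representation of $M$, so that it is absorbed into $c$ rather than contributing a larger term.
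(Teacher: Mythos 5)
Your proposal is correct and follows essentially the same route as the paper's proof: a step-by-step cost accounting using linearity of $cost_h$ and the fact that the total length of the concatenated strings $c_i$ is $w\cdot h_{out}$, with Step 7 charged as $O(w)$. Your extra remark that the output-0 case (invalid signature) is strictly cheaper, so the stated expression is really a uniform upper bound over the early-exit outcomes, is a small refinement the paper glosses over but does not change the argument.
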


\begin{proof}
The size of the signature $\sigma$ comes directly from its form, composed by  $t+1$ hashes $T$ and the signature $\sigma'$.
In Step 1, {\sc MLSS-Sign} computes the matrix $M$; in Step 2 it computes $n$ hashes with total input size $b$, followed by $t$ hashes of total input size $w \cdot h_{out}$
plus a hash of the entire document with input size $b$; and in Step 3 we have assumed that {\sc Sign} applies a 
hash on $T$, which has size $(t+1)h_{out}$. Hence, the linearity of  $cost_h$ yields $cost_{h}(2b+(w+t+1) h_{out})$ for hash computations
plus the cost of signing a message of size $h_{out}$ and computing matrix $M$.
If we apply algorithm {\sc Sign} directly to the message $B$ we have $cost_{h}(b) + cost_{\mbox{sign}}(h_{out})$, instead.\\
In {\sc MLSS-Verify}, Step 1 yields the cost of {\sc Verify} of 
an input of size $(t+1)h_{out}$, and Step 2 uses a hash of the whole document plus a comparison of $h_{out}$ bits, giving $cost_h(b+(t+1)\cdot h_{out})+cost_{\mbox{verify}}(h_{out})+comp(h_{out})$.
In the case of blocks that have been modified with a valid signature and $lc$ = true, the cost incurred by Steps 4 to 6 is $cost_h(b+ w\cdot h_{out}) + comp(t\cdot h_{out})+cost_{CFF}(d,n)$. 
Step 7 can be done in time linear with $w$.
\end{proof}





We now discuss practical implications of Theorem~\ref{analysis}.
The most significant cost in digital signature algorithms is related to the use of cryptographic functions. If we compare our algorithms with the traditional ones, we note that this cost remains the same, while most of our extra cost comes from additional hash computations.
In Section 3.3, we give a detailed comparison of these algorithms based on a standard digital signature method. As an illustration, for a document divided into $n = 256$ blocks with 1024 bytes per block and $d=10$, the running time of {\sc MLSS-Sign} (the calculation of $cost_h$ and $cost_{sign}$) is 4.8561 ms, while a traditional {\sc Sign} costs 2.8804 ms. If the document is not modified or the signature is invalid or $lc$ = false, the running time of {\sc MLSS-Verify} is 1.5246 ms while {\sc Verify} is 1.4934 ms. If there are modifications but the signature is valid and \\$lc$ = true, we locate $d = 10$ modified blocks using a total running time of 3.4691 ms. More experiments are provided in Section 3.3.

We note that the signature $\sigma$ generated by {\sc MLSS-Sign} has an additional size of $(t+1) h_{out}$ bits. In order to keep $|\sigma|$ as small as possible we need to minimize $t$, the number of rows in the CFF matrix. To obtain a small enough $t$, we consider constructions of Sperner~\cite{sperner} (S1), Porat and Rothschild~\cite{PR} (PR), Pastuszak et al.~\cite{PPS} (PPS), as well as using identity matrix $I_n$; we use the relatively better $t$ given $d$ and $n$ to choose one construction among these, as given in Proposition~\ref{CFF-calc}. A small $t$ is also important to reduce the extra computation costs for signing and verifying; moreover, given $t$
it is desirable to choose a CFF matrix with the smallest $w$.

\begin{proposition}~\label{CFF-calc}
Let $d,n$ be integers, and let $t$ be the number of rows and $w$ be the number of ones of the 
CFF matrix given by function {\rm MLSS-CFF}$(d,n)$.
Then the table below gives the values of $t$ and $w$, for each the four constructions specified.
\begin{center}
\label{tab:values}
\begin{tabular}{|l|c|c|} \hline
ranges of $d,n$ & t & w\\ \hline
$d=1$, for all $n$: use {\rm S1} & $\sim \log_2 n$ & $\sim \lfloor{\frac{\log_2 n}{2}}\rfloor n$\\ \hline
$d\in[2,\frac{\sqrt{n}}{\ln n})$: use {\rm PR} & $(d+1)^2\ln n$ & $\frac{(d+1)}{2} n \ln n$ \\ \hline
$d\in[\frac{\sqrt{n}}{\ln n},\sqrt{n}-1)$: use {\rm PPS} & $(d+1) \sqrt{n}$ & $n(d+1)$\\ \hline
$d\geq \sqrt{n}-1$: use ${I_n}$ & $n$ & $n$\\ \hline
\end{tabular}
\end{center}
\end{proposition}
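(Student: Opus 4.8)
The plan is to verify, for each of the four ranges of $d$ and $n$ in the table, two things: first, that the stated construction actually yields a valid $d$-CFF$(t,n)$ for the claimed number of rows $t$; and second, that the number of ones $w$ in the matrix produced by that construction matches the stated value (or is asymptotic to it). Since {\sc MLSS-CFF}$(d,n)$ is defined to dispatch to exactly these constructions on exactly these ranges, the proposition is really a bookkeeping statement assembled from known facts in the literature, so the work is in extracting $t$ and $w$ from each construction rather than in any new combinatorial argument.

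First I would handle $d=1$ via Sperner's theorem, as already recalled in Section~\ref{sec:gt}: take $M$ to be the $t\times n$ matrix whose columns are $n$ distinct subsets of a $t$-set forming an antichain, which exists precisely when $\binom{t}{\lfloor t/2\rfloor}\ge n$, giving $t\sim\log_2 n$. For $w$, the natural choice is to use columns of weight $\lfloor t/2\rfloor$ (the largest antichain level), so each of the $n$ columns contributes $\lfloor t/2\rfloor$ ones, yielding $w\sim\lfloor\frac{\log_2 n}{2}\rfloor n$ after substituting $t\sim\log_2 n$. Next, for $d\in[2,\frac{\sqrt n}{\ln n})$ I would invoke the Porat--Rothschild construction~\cite{PR}, which gives a $d$-CFF with $t\le(d+1)^2\ln n$; its matrix arises from a Reed--Solomon-type code where each of the $n$ columns has exactly one one in each of $q$ "blocks" of rows, so with the parameter choices that make $t=(d+1)^2\ln n$ one gets column weight $\frac{(d+1)}{2}\ln n$ and hence $w=\frac{(d+1)}{2}n\ln n$. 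For $d\in[\frac{\sqrt n}{\ln n},\sqrt n-1)$ I would use the Pastuszak et al.\ construction~\cite{PPS} with $t\le(d+1)\sqrt n$, whose matrix likewise places exactly $(d+1)$ ones per column (one per "strip"), giving $w=n(d+1)$. Finally, for $d\ge\sqrt n-1$ the identity matrix $I_n$ is trivially a $d$-CFF (indeed an $n$-CFF) with $t=n$ and $w=n$; I would also note the threshold $\sqrt n-1$ is chosen so that the PPS bound $(d+1)\sqrt n$ would exceed $n$ beyond this point, justifying the switch to $I_n$ as the "relatively better" choice.

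The main obstacle I anticipate is pinning down the exact column weights $w$ for the Porat--Rothschild and Pastuszak et al.\ constructions, since these depend on the precise parameterization (the field size $q$, the code length, the concatenation structure) used to hit the stated $t$; the CFF property and the bound on $t$ are quoted directly from those papers, but $w$ requires tracing through how the $q$-ary code is turned into a binary incidence matrix via the standard "replace each symbol by a unit vector of length $q$" transformation, which multiplies the number of coordinates by $q$ while keeping exactly one one per block. A secondary, purely expository point is that the table entries for $t$ in the first row are asymptotic ($\sim$) while those in the middle two rows are stated as exact upper bounds; I would be careful to present each entry with the qualifier that the cited construction supports, and to remark that {\sc MLSS-CFF} selects whichever construction gives the smaller $t$ on the given range, so the table records the representative values used in the complexity analysis of Theorem~\ref{analysis} rather than claiming optimality of $t(d,n)$.
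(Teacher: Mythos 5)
Your proposal is correct and follows essentially the same route as the paper: Sperner's antichain with columns of weight $\lfloor t/2\rfloor$ for $d=1$, reading off $t$ and $w$ from the Porat--Rothschild and Pastuszak et al.\ constructions (the paper cites the proof of Theorem 1 of~\cite{PR} and Definition 3 of~\cite{PPS} for exactly this), and the trivial identity matrix for large $d$. One small imprecision: in the PR case the binary matrix has $m$ blocks of $q$ rows each (one block per code coordinate), so the column weight is the code length $m=\frac{(d+1)}{2}\ln n$ when $q=2(d+1)$, not ``one one in each of $q$ blocks'' as you phrased it, though your final values of $w$ and $t$ are the intended ones.
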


\begin{proof}
For $d=1$, Sperner theorem gives $t =\min\{s : {s \choose \lfloor s/2 \rfloor}\geq n\}$ and $t \rightarrow \log_2 n$ as $n \rightarrow \infty$. Each column of the matrix has $\lfloor t/2 \rfloor$ ones and the matrix has $n$ columns. Using the proof of Theorem 1 in Section 4 of~\cite{PR} we obtain $w$ and $t$ for construction PR. The values of $w$ and $t$ in construction PPS come directly from Definition 3 in~\cite{PPS}.
\end{proof}

\subsection{Experimental results}


Here we provide some experiments using a standard digital signature (SHA256 with RSA 2048 bits) and \emph{openssl}. We quantify the cost of computing $cost_h$, while ignoring
other less relevant linear costs ($comp(.)$, $c\cdot w$) and the cost of CFF matrix computation, which could be preprocessed for specific applications. Indeed, we experimentally
verify\footnote{in an iMac 2.7 GHz Intel Core i5 with 6 MB on-chip L3 cache.} that hash computations behave in {\em openssl} as a linear function of the input size (which is linear in $t$ and $w$),  approximately as $5.52 \times 10^{-9} x + 3.819 \times 10^{-7}$, agreeing with the assumptions of Theorem~\ref{analysis}.

In the following tables we give running times obtained experimentally for a set of chosen values of the document size $b$, the number of blocks $n$ and the number of tests $t$, where the costs are given in milliseconds and the document size $b$ in bytes. In the tables below, ``{\sc MLSS-S}'' refers to the running time of {\sc MLSS-Sign}, ``{\sc MLSS-V.1}'' refers to the running time of {\sc MLSS-Verify} when there is no modifications or the signature is invalid or $lc$ = false, while ``{\sc MLSS-V.2}'' gives {\sc MLSS-Verify} running time when modifications occurred and $lc$ = true. 
Different tables use block sizes of 1024 and 8192 bytes. 

\begin{table*}[ht!]
\small
\centering
\caption{Comparison between methods with blocks of size 8192 bytes and $d = 1$.}
\begin{tabular}{c|c|c|c|c|c|c|c|c}
\hline $b$ & $n$ & $t$ & $w$ & {\sc Sign} & {\sc MLSS-S} & {\sc Verify} & {\sc MLSS-V.1} & {\sc MLSS-V.2}\\ 
\hline 16,384 & 2 & 2 & 2 & 1.5238 & 1.6151 & 0.1368 & 0.1373 & 0.2281\\
\hline 65,536 & 8 & 5 & 20 & 1.7951 & 2.1614 & 0.4081 & 0.4092 & 0.7744\\
\hline 262,144 & 32 & 7 & 112 & 2.8804 & 4.3486 & 1.4934 & 1.4948 & 2.9616\\
\hline 1,048,576 & 128 & 10 & 640 & 7.2215 & 13.1246 & 5.8345 & 5.8364 & 11.7376\\
\hline
\end{tabular}

\vspace*{2mm}

\small
\centering
\caption{Comparison between methods with blocks of size 1024 bytes and $d = 1$.}
\begin{tabular}{c|c|c|c|c|c|c|c|c}
\hline $b$ & $n$ & $t$ & $w$ & {\sc Sign} & {\sc MLSS-S} & {\sc Verify} & {\sc MLSS-V.1} & {\sc MLSS-V.2}\\ 
\hline 16,384 & 16 & 6 & 48 & 1.5238 & 1.6239  & 0.1368 & 0.1380 & 0.2369\\
\hline 65,536 & 64 & 8 & 256 & 1.7951 & 2.2037 & 0.4081 & 0.4097 & 0.8167\\
\hline 262,144 & 256 & 11 & 1408 & 2.8804 & 4.5782 & 1.4934 & 1.4955 & 3.1912\\
\hline 1,048,576 & 1024 & 13 & 6656 & 7.2215 & 14.1878 & 5.8345 & 5.8369 & 12.8008\\
\hline
\end{tabular}

\vspace*{2mm}
\small
\centering
\caption{Comparison between methods with blocks of size 8192 bytes and $d = 2$.}
\begin{tabular}{c|c|c|c|c|c|c|c|c}
\hline $b$ &       $n$ & $t$ & $w$           & {\sc Sign}& {\sc MLSS-S} & {\sc Verify} & {\sc MLSS-V.1} & {\sc MLSS-V.2}\\ 
\hline 16,384     & 2 & 2 & 2                  & 1.5238       & 1.6151             & 0.1368         & 0.1373          & 0.2281\\
\hline 65,536     & 8 & 8 & 8                  & 1.7951       & 2.1599            & 0.4081         & 0.4097 & 0.7729\\
\hline 262,144   & 32 & 16 & 96            & 2.8804       & 4.3475            & 1.4934         & 1.4965 & 2.9605\\
\hline 1,048,576 & 128 & 33 & 384          & 7.2215       & 13.0836          & 5.8345         & 5.8406           & 11.6966\\
\hline
\end{tabular}

\vspace*{2mm}

\small
\centering
\caption{Comparison between methods with blocks of size 1024 bytes and $d = 2$.}
\begin{tabular}{c|c|c|c|c|c|c|c|c}
\hline $b$ & $n$ & $t$ & $w$             & {\sc Sign}& {\sc MLSS-S} & {\sc Verify} & {\sc MLSS-V.1} & {\sc MLSS-V.2}\\ 
\hline 16,384 & 16 & 12 & 48               & 1.5238     & 1.6250             & 0.1368         & 0.1391 & 0.2380\\
\hline 65,536 & 64 & 24 & 192             & 1.7951     & 2.1952             & 0.4081         & 0.4125                & 0.8082\\
\hline 262,144 & 256 & 48 & 768       & 2.8804     & 4.4717             & 1.4934         & 1.5020                & 3.0847\\
\hline 1,048,576 & 1024 & 62 & 10646 & 7.2215     & 14.9014           & 5.8345         & 5.8457                & 13.5144\\
\hline
\end{tabular}

\end{table*}

\vspace*{2mm}

\begin{table*}[ht!]
\small
\centering
\caption{Comparison between methods with blocks of size 8192 bytes and $d = 3$.}
\begin{tabular}{c|c|c|c|c|c|c|c|c}
\hline $b$ &       $n$ & $t$ & $w$           & {\sc Sign}& {\sc MLSS-S} & {\sc Verify} & {\sc MLSS-V.1} & {\sc MLSS-V.2}\\ 
\hline 16,384     & 2 & 2 & 2                & 1.5238       & 1.6151       & 0.1368         & 0.1373          & 0.2281\\
\hline 65,536     & 8 & 8 & 8                  & 1.7951       & 2.1599       & 0.4081         & 0.4097           & 0.7729\\
\hline 262,144   & 32 & 22 & 128              & 2.8804       & 4.3542       & 1.4934         & 1.4975           & 2.9672\\
\hline 1,048,576 & 128 & 45 & 512          & 7.2215       & 13.1082         & 5.8345         & 5.8426           & 11.7212\\
\hline
\end{tabular}
\vspace*{2mm}

\small
\centering
\caption{Comparison between methods with blocks of size 1024 bytes and $d = 3$.}
\begin{tabular}{c|c|c|c|c|c|c|c|c}
\hline $b$ & $n$ & $t$ & $w$              & {\sc Sign}& {\sc MLSS-S} & {\sc Verify} & {\sc MLSS-V.1} & {\sc MLSS-V.2}\\ 
\hline 16,384 & 16 & 16 & 16                & 1.5238     & 1.6200           & 0.1368         & 0.1398                & 0.2330\\
\hline 65,536 & 64 & 32 & 256              & 1.7951     & 2.2079           & 0.4081         & 0.4139 & 0.8209\\
\hline 262,144 & 256 & 64 & 1024        & 2.8804     & 4.5198             & 1.4934         & 1.5048   & 3.1328\\
\hline 1,048,576 & 1024 & 110 & 14195  & 7.2215     & 15.5369           & 5.8345         & 5.8542                & 14.1499\\
\hline
\end{tabular}


\vspace*{2mm}

\small
\centering
\caption{Comparison between methods with blocks of size 8192 bytes and $d = 10$.}
\begin{tabular}{c|c|c|c|c|c|c|c|c}
\hline $b$ &       $n$ & $t$ & $w$           & {\sc Sign}& {\sc MLSS-S} & {\sc Verify} & {\sc MLSS-V.1} & {\sc MLSS-V.2}\\ 
\hline 16,384     & 2 & 2 & 2                  & 1.5238       & 1.6151             & 0.1368         & 0.1373          & 0.2281\\
\hline 65,536     & 8 & 8 & 8                  & 1.7951       & 2.1599            & 0.4081         & 0.4097           & 0.7729\\
\hline 262,144   & 32 & 32 & 32              & 2.8804       & 4.3389            & 1.4934         & 1.4992 & 2.9519\\
\hline 1,048,576 & 128 & 124 & 1408        & 7.2215       & 13.2805          & 5.8345         & 5.8566           & 11.8935\\
\hline
\end{tabular}

\vspace*{2mm}
\small
\centering
\caption{Comparison between methods with blocks of size 1024 bytes and $d = 10$.}
\begin{tabular}{c|c|c|c|c|c|c|c|c}
\hline $b$ & $n$ & $t$ & $w$              & {\sc Sign}& {\sc MLSS-S} & {\sc Verify} & {\sc MLSS-V.1} & {\sc MLSS-V.2}\\ 
\hline 16,384 & 16 & 16 & 16                & 1.5238     & 1.6200             & 0.1368         & 0.1398 & 0.2330\\
\hline 65,536 & 64 & 64 & 64                & 1.7951     & 2.1796             & 0.4081         & 0.4196 & 0.7926\\
\hline 262,144 & 256 & 176 & 2816          & 2.8804     & 4.8561            & 1.4934         & 1.5246 & 3.4691\\
\hline 1,048,576 & 1024 & 352 & 11264       & 7.2215     & 15.0616           & 5.8345         & 5.8968 & 13.6746\\
\hline
\end{tabular}

\end{table*}

As we can see in Tables 1 to 8, the running time of {\sc MLSS-Sign} is on average the double of a traditional {\sc Sign} for the biggest documents. Otherwise, for the smaller documents, the running time can be similar to {\sc Sign} (see columns {\sc Sign} and {\sc MLSS-S}). If the document is not modified or if the verifier is not interested in locating the modifications, the cost of {\sc MLSS-Verify} is basically the same as a regular RSA {\sc Verify} for all sizes of documents or values of $d$ (see columns {\sc Verify} and {\sc MLSS-V.1} for different values of $d$). Finally, the cost to locate $d$ modifications with {\sc MLSS-V.2} remain the double of {\sc Verify} even when the size of the document and the value of $d$ increase for block sizes of 8192 bytes. For documents divided into more blocks (blocks of size 1024 bytes), the cost of {\sc MLSS-V.2} can be a little more than the double, specially in the cases where the values of $b$ and $d$ are large (see columns {\sc Verify} and {\sc MLSS-V.2} for different values of $d$).

In conclusion, for the values considered we observe a moderate increase
in the running time of {\sc MLSS-Sign} and {\sc MLSS-Verify} with respect to {\sc Sign} and {\sc Verify},
and this increase is highest when $n$ and $d$ are larger.
For the documents with $n=1024$ the increase was by a factor between 2 and 2.5
when $d$ varies from 1 to 10.
We also remark that while this increase of time is always incurred 
for {\sc MLSS-Sign}, it is not incurred in {\sc MLSS-Verify} if no modifications
occurred or if the verifier does not wish to locate modifications ($lc$=false).


\subsection{Division in blocks and block sizes}\label{sec:blocks}

An issue that needs to be considered is the scheme to divide the document into $n$ blocks. Sender and receiver must use the same block organization, and we require that this organization must be preserved.
For instance, dividing a text into blocks of the same size makes it hard to support modifications, as one bit inserted into block 1 would prevent us to keep track of where the other blocks are. Therefore, information on block structure must be included with the document (e.g.~a description header) and legitimate modifications should be done using a system that is ``block aware".

We suggest two possible solutions to the problem of dividing the document into blocks. A first solution 
is to use special delimiters to separate blocks (e.g.~tags on an XML document, or reserved characters on a text) or a description header that indicates where each block starts. A second solution is to use the own data organization to separate blocks (e.g.~the records of a database). We could also use the semantics of the data to separate blocks for a specific application (e.g.~sections of a document).

A second issue to be considered is the block size, which depends on the
application
needs and computational capabilities.
The extreme values of block size may not be suitable. Block size equal to
$b$ ($n=1$)
means no ability to locate modifications, while very small block size
makes $n$ and
$t$ too large, rendering the scheme inefficient. We can observe the effect
of number of blocks on running time for different values of $d$ in Tables
1 to 8.

\section{Conclusion}\label{sec:conclusion}
In this paper, we propose a general Modification Location Signature Scheme
(MLSS), where we decouple the verification of the signature from the
possible modifications in the
document. Our method is the first to address the issue of locating
modified blocks.
This is accomplished with some additional costs in hash computations
but no additional costs in the cryptographic functions involved ({\sc Sign} and {\sc Verify}).

\section*{Acknowledgements}
Thaís Bardini Idalino was supported by a CAPES-Brazil scholarship; Lucia Moura and Daniel Panario were partially supported by an NSERC-Canada discovery grant; Daniel Panario was partially funded by a grant of the PVE program of CAPES-Brazil. We would like to thank the referees for several suggestions that improved
this paper.


\begin{thebibliography}{1}


\bibitem{gtLivro}
D.-Z. Du and F.~K. Hwang, \emph{Combinatorial Group Testing and
its Applications}, World Scientific, 2000.



\bibitem{Goodrich}
M.~T.~Goodrich, M.~J.~Atallah and R.~Tamassia, ``Indexing Information for Data Forensics,'' in
\emph{ACNS}, Lecture Notes in Computer Science vol.~3531, 2005, pp. 206--221.



\bibitem{Johnson}
R.~Johnson, D.~Molnar, D.~Song, and D.~Wagner. ``Homomorphic Signature Schemes", in \emph{CT-RSA 2002}, vol.~2271, LNCS, 2002, pp. 244--262. 

\bibitem{PPS}%
J.~Pastuszak, J.~Pieprzyk and J.~Seberry, ``Codes identifying
bad signature in batches'', in \emph{Progress in
cryptology---{INDOCRYPT} 2000}, Lecture Notes in Computer
Science vol.~1977, Springer, 2000, pp. 143--154.

\bibitem{PR}%
E.~Porat and A.~Rothschild, ``Explicit nonadaptive combinatorial
group testing schemes'', \emph{IEEE Transactions on Information
Theory}, vol.~57, pp. 7982--7989, 2011.



\bibitem{sperner}%
E.~Sperner, ``Ein {S}atz \"uber {U}ntermengen einer endlichen
{M}enge'', \emph{Math. Z.}, vol.~27, 1928, pp. 544--548.

\bibitem{Steinfeld}
R.~Steinfeld, L.~Bull, and Y.~Zheng, ``Content extraction signatures", \emph{Proceedings of the 4th International Conference on Information Security and Cryptology (ICISC 2001)}, 2002, pp. 285--304.

\bibitem{zaverucha}%
G.~M. Zaverucha and D.~R. Stinson, ``Group testing and batch
verification'', in \emph{ICITS}, Lecture Notes in Computer
Science vol.~5973, Springer, 2009, pp. 140--157.

\end{thebibliography}
\end{document}